\newtheorem{theorem}{Theorem}
\newtheorem{lemma}{Lemma}
\theoremstyle{definition}
\newtheorem{example}{Example}
\renewcommand*\env@matrix[1][c]{\hskip -\arraycolsep
  \let\@ifnextchar\new@ifnextchar
  \array{*\c@MaxMatrixCols #1}}
\newcommand{\tth}{{\text{th}}} 
\newcommand{\tr}{{\sf T}} 
\newcommand{\Eb}{\mathbb{E}}
\newcommand{\Zb}{\mathbb{Z}}
\newcommand{\Cb}{\mathbb{C}}
\newcommand{\p}{\mathbf}
\newcommand{\na}{{n_a}}
\newcommand{\nt}{{n_t}}
\newcommand{\nr}{{n_r}}
\newcommand{\snr}{{\sf SNR}}
\newcommand{\Cs}{\mathcal{C}}
\newcommand{\Cn}{\mathcal{C}_{{\sf T}}}
\newcommand{\Mb}{\mathbbmss{M}}
\newcommand{\Mm}{\mathbbmss{\widetilde{M}}}
\newcommand{\Ac}{\mathcal{A}}
\newcommand{\Tc}{\mathcal{T}}
\newcommand{\spe}{\eta}
\newcommand{\gsm}{{\sf GSM}}
\newcommand{\Pn}{P(\Cn)} 
\newcommand{\energy}{\mathcal{E}}
\newcommand{\dmin}{d_{\min}}
\title{\LARGE Improving Generalized Spatial Modulation using Translation Patterns}
\author{Lakshit Singla and Lakshmi Natarajan
\thanks{The authors are with the Department of Electrical Engineering, Indian Institute of Technology Hyderabad, email: \{ee17btech11021,\,lakshminatarajan\}@iith.ac.in.}%
\thanks{
The authors thank Dr.\ Lakshmi Narasimhan Theagarajan, Indian Institute of Technology Palakkad, for several insightful discussions.
}%
}
\begin{document}

\maketitle

\begin{abstract}
\boldmath
Generalized spatial modulation (GSM) is a spectral-efficient technique used in multiple-input multiple-output (MIMO) wireless communications when the number of radio frequency chains at the transmitter is less than the number of transmit antenna elements.
We propose a family of signal constellations, as an improvement over GSM, which splits the information bits into three parts, and encodes the first part into a set of complex symbols, the second part into the choice of a subset of antennas activated for transmission (as in GSM), and the third into a \emph{translation pattern} that offsets the symbols transmitted through the activated antennas.
The nominal coding gain (the ratio of the squared minimum distance between transmit vectors to the transmit power) of our scheme is higher than that of GSM by at least $0.86$~dB, and this improvement can be as much as $2.87$~dB based on the system parameters.
We show that the new scheme has advantages over other known signal constellations for GSM, in terms of error performance, nominal coding gain and design flexibility.
\end{abstract}

\begin{IEEEkeywords}
coding gain, generalized spatial modulation (GSM), MIMO, minimum distance 
\end{IEEEkeywords}

\section{Introduction} \label{sec:introduction}

\IEEEPARstart{T}{he} high complexity involved in the design of communication terminals with multiple radio frequency (RF) chains can lead to scenarios where the number of RF chains $\na$ available in a wireless transmitter is less than the number of antenna elements $\nt$~\cite{RHG_COMM_MAG_11}.
Generalized spatial modulation (GSM)~\cite{WJS_TWCOM_12,YSMH_Asilomar_10,FHXYH_Globecom_10} is a well-known modulation technique that provides high spectral efficiencies in such scenarios.
In GSM, the choice of $\na$ antenna elements activated for transmission out of the $\nt$ available elements, called \emph{antenna activation pattern}, conveys a part of the information bits.


Several signal constellations have been proposed in the literature that improve the error performance of GSM~\cite{CSSS_TCOM_15,CSSS_TWC_16,FRS_WCL_18,FrS_COMML_17,GZZDLA_TWC_19}, albeit with some design limitations. 
Enhanced Spatial Modulation (ESM) and its variants~\cite{CSSS_TCOM_15,CSSS_TWC_16} have been designed for limited choice of system parameters (for example, complex symbols must take values from $16$- or $64$-QAM and $\na$ must be even for ESM-Type2 constellations~\cite{CSSS_TWC_16}).
The signal constellation in~\cite{FRS_WCL_18} relies on computer search and is available only for \mbox{$\na=2$}.
The scheme from~\cite{FrS_COMML_17} does not modulate an integer number of information bits, and its transmit vectors contain irrational components.
Note that, to minimize the cost of digital-to-analog conversion at the transmitter it is desirable to use signals that do not contain irrational components~\cite{HaV_WCL_13}.
The signal design technique of~\cite{GZZDLA_TWC_19} relies on a numerical optimization algorithm whose complexity increases exponentially in the spectral efficiency.

In this paper, we exploit a technique which is the essence of \emph{Construction A} of lattices~\cite{CoS_Springer_13} to obtain new signal constellations for GSM.
This technique uses the codewords of an error correcting code~\cite{MWS_Theory_77} to generate high-density lattice packings.
In our proposed communication scheme we partition information bits into three blocks, the first is modulated into complex symbols, the second into an antenna activation pattern, and the third into a \emph{translation pattern} or \emph{vector} that offsets the complex symbols transmitted through the activated antennas.
The translation patterns used in our scheme arise from the codewords of the single-parity check code.
We use the ratio of the squared minimum distance of the constellation to the transmit power, which we refer to as the \emph{nominal coding gain}, as the figure-of-merit of a scheme. 
We show that the proposed scheme provides improvements in nominal coding gain over GSM by at least $0.86$~dB and up to $2.87$~dB.

In comparison with~\cite{CSSS_TCOM_15,CSSS_TWC_16,FRS_WCL_18,FrS_COMML_17,GZZDLA_TWC_19}, our technique allows a simple and explicit construction for any choice of alphabet size (for the complex symbols), and any choice of $\nt, \na \geq 2$.
This flexibility allows us to achieve a wide range of spectral efficiencies, including the ability to encode integer number of information bits.
Further, the transmit vectors in our constellation do not contain irrational components. 
Simulation and numerical results show that our scheme provides a lower error rate and larger nominal coding gain than GSM and the schemes from~\cite{CSSS_TCOM_15} and~\cite{FRS_WCL_18}, and the error performance is similar to that of the scheme proposed in~\cite{CSSS_TWC_16}.


\noindent
\emph{Notation:}
We denote column vectors and matrices using bold lower and upper case letters, respectively. 
For any vector $\p{x}$, the symbols $\p{x}^\tr$, $\|\p{x}\|$ and $\|\p{x}\|_0$ denote the transpose, Euclidean norm ($\ell_2$-norm) and the $\ell_0$-norm of $\p{x}$, respectively.
For a positive integer $N$, $[N]$ denotes the set $\{1,\dots,N\}$. 
For sets $A,B$, the symbol $A \setminus B$ is the set of all elements in $A$ that are not in $B$.
$\mathcal{CN}(0,\sigma^2)$ is the circularly symmetric complex Gaussian distribution with zero mean and variance $\sigma^2$.

\section{System Model and Preliminaries} \label{sec:system_model}

We consider a wireless slow fading channel with $\nt$ transmit and $\nr$ receive antennas, modelled as \mbox{$\p{y} = \p{Hx} + \p{w}$},
where \mbox{$\p{x} \in \Cb^{\nt}$} is the transmit vector, \mbox{$\p{H} \in \Cb^{\nr \times \nt}$} is the channel matrix, \mbox{$\p{y} \in \Cb^{\nr}$} is the received vector and \mbox{$\p{w} \in \Cb^{\nr}$} is the additive noise at the receiver.
We assume Rayleigh fading, i.e., each entry of $\p{H}$ is an independent $\mathcal{CN}(0,1)$ random variable, and that only the receiver knows the value of $\p{H}$ and the transmitter does not.
The entries of $\p{w}$ are independent $\mathcal{CN}(0,N_o)$ random variables.
In this paper we consider modulation schemes for this channel that span only one time slot, i.e., we do not consider coding across time.
We further assume that the transmitter is equipped with $\na$ RF chains, where $2 \leq \na \leq \nt$. 
This implies that the number of active antennas, at any given time instance is at the most $\na$, i.e., any transmit vector $\p{x}$ must satisfy $\|\p{x}\|_0 \leq \na$.

A \emph{signal constellation} $\Cs$ for $\nt$ transmit antennas and $\na$ RF chains is a finite set of complex vectors each of length $\nt$ and with $\ell_0$-norm less than or equal to $\na$.
We assume that each vector in $\Cs$ is equally likely to be transmitted.
The spectral efficiency of $\Cs$ is \mbox{$\eta=\log_2 |\Cs|$}~bits/sec/Hz, the average transmit power is $P = \sum_{\p{x} \in \Cs} \|\p{x}\|^2/|\Cs|$, and the resulting signal to noise ratio is $\snr = P/N_o$.
The minimum Euclidean distance between the points in $\Cs$ is $\dmin(\Cs) = \min\{\,\|\p{x} - \p{x'}\|~|~\p{x}, \p{x'} \in \Cs, \, \p{x} \neq \p{x'}\,\}$.
We will assume that the receiver employs the maximum likelihood detector, i.e., the detector output is 
$\arg \min_{\p{x} \in \Cs} \|\p{y} - \p{Hx}\|^2$.
The pair-wise error probability between $\p{x}, \p{x'} \in \Cs$ 
is upper bounded by~\cite{TsV_Cambridge}
\begin{equation*} 
\frac{(4 N_o)^\nr}{\|\p{x} - \p{x'}\|^{2\nr}} = \frac{4^\nr \, P^{\nr}}{\snr^\nr \! \|\p{x} - \p{x'}\|^{2\nr}} \leq  \frac{4^\nr}{\snr^\nr} \cdot \left(\frac{P}{\dmin^2(\Cs)}\right)^\nr \!\!\!\!\!\!.
\end{equation*} 
The \emph{nominal coding gain} $\delta(\Cs)$ of the constellation $\Cs$ is 
\mbox{$\textstyle \delta(\Cs) = {\dmin^2(\Cs)}/{P}$}.
We deduce that the pairwise error probability between any pair of transmit vectors is at the most 
$\textstyle \left(\,{4}\,/ \,{\snr \, \delta(\Cs)} \,\right)^\nr$. 
Typically, a large value of $\delta(\Cs)$ indicates a small probability of error. 

\subsection{Review of Generalized spatial modulation~\cite{WJS_TWCOM_12,YSMH_Asilomar_10,FHXYH_Globecom_10}}

In GSM, $\na$ information-bearing complex symbols are transmitted through a set of $\na$ antennas among the available $\nt$ transmit antennas. 
The choice of the antennas transmitting these symbols carries additional information. 
Let \mbox{$\Ac=\{A_1,\dots,A_L\}$} be a collection of $L$ distinct $\na$-sized subsets of $[\nt]$, i.e., $A_\ell \subset [\nt]$ and $|A_\ell| = \na$ for each \mbox{$\ell \in [L]$}.
Each $A \in \Ac$ denotes a possible collection of $\na$ antennas 
that can be activated for transmission.
Clearly, $L \leq \binom{\nt}{\na}$.
It is not uncommon to choose $L$ as the largest power of $2$ less than or equal to $\binom{\nt}{\na}$, i.e., $\log_2 L = \lfloor \log_2 \binom{\nt}{\na}\rfloor$.
If an $M$-ary complex alphabet \mbox{$\Mb \subset \Cb$}, such as $M$-QAM, is used for encoding, then $\na \log_2M$ information bits are modulated into a collection of $\na$ symbols $z_1,\dots,z_\na \in \Mb$. 
An additional $\log_2 L$ information bits are encoded into the subset $A \in \Ac$ of antennas activated for transmission.
Then the symbols $z_1,\dots,z_{\na}$ are transmitted through the subset $A$ of antennas.
Hence, the spectral efficiency of GSM is
$\spe_\gsm = \na \log_2 M + \log_2 L \text{ bits/sec/Hz}$.

Throughout this paper, we will refer to the set \mbox{$\Cs \subset \Cb^{\nt}$} of transmit vectors as \emph{signal constellation} or simply the \emph{constellation}. 
We will also use complex signal sets, such as \mbox{$\Mb \subset \Cb$}, to design new constellations. We will refer to such complex signal sets as \emph{complex alphabets} or simply \emph{alphabets}.

\section{Proposed Signal Constellation} \label{sec:C3}

We present the new family of signal constellations in this section for any choice of $\nt$ and $2 \leq \na \leq \nt$. 

\subsection{Complex Alphabets}

We denote the set of Gaussian integers as $\Zb[i]$, i.e., $\Zb[i]=\{a + ib~|~ a,b \in \Zb\}$, and denote the complex number $\sfrac{1}{2} + \sfrac{i}{2}$ as $\alpha$. The set \mbox{$\Zb[i] + \alpha = \{x + \alpha~|~x \in \Zb[i]\,\}$}
consists of all complex numbers whose real and imaginary parts are \emph{half-integers}, i.e., are of the form $q/2$ where $q$ is an odd integer.
Note that $\Zb[i] + \alpha$ does not contain $0$, and the smallest squared absolute value among the elements in $\Zb[i] + \alpha$ is $\sfrac{1}{2}$, which is achieved by $\pm \sfrac{1}{2} \pm \sfrac{i}{2}$. 
Our construction makes use of complex alphabets $\Mm$ that satisfy the following two properties
\begin{equation} \label{eq:P1_P2}
\text{\it Property P1. } \Mm \subset \Zb[i] + \alpha, \text{ and } \text{\it Property P2. } -\alpha \notin \Mm.
\end{equation}


\begin{figure}
\begin{flushleft}
\setlength{\tabcolsep}{3pt}
\begin{tabular}{ll}
\includegraphics[width=1.65in]{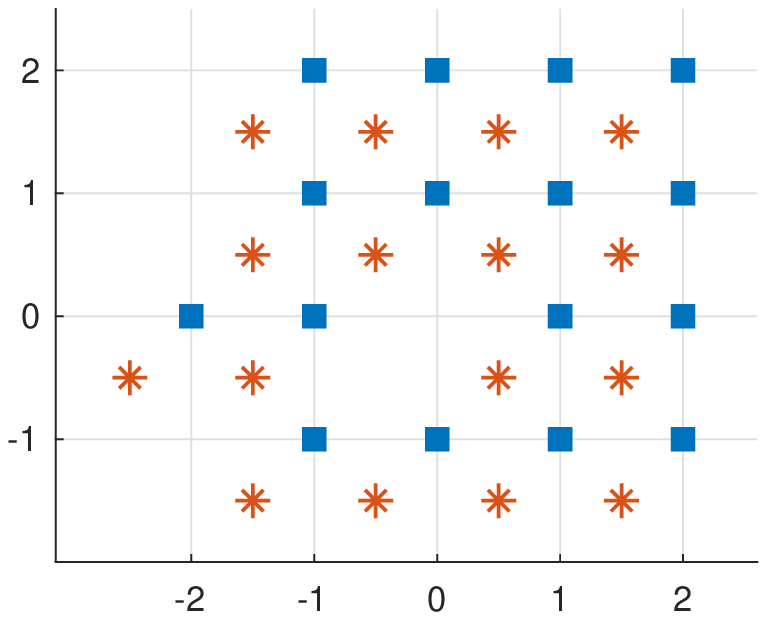}
&
\includegraphics[width=1.75in]{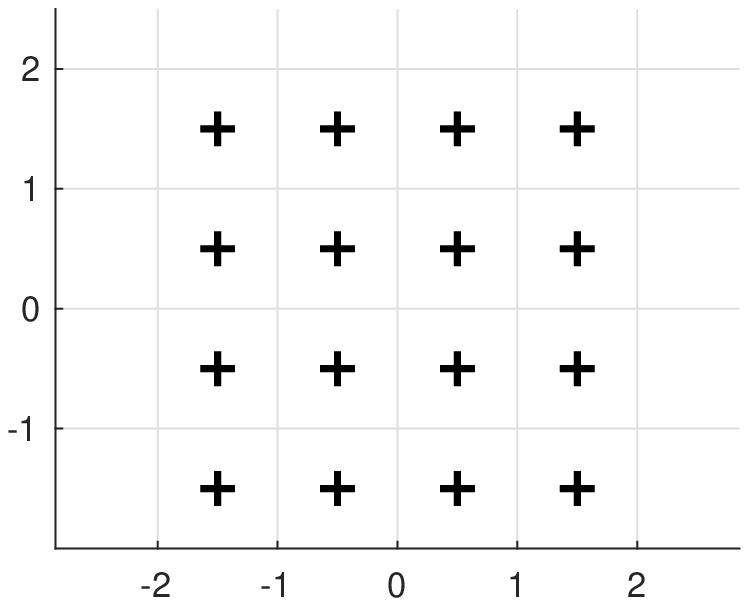}
\end{tabular}
\end{flushleft}
\vspace{-2mm}
\caption{Left: modified square $16$-QAM alphabet $\Mm$ (red asterisks), and its translation $\Mm + \alpha$ (blue squares) that will be used in our constellation design. Right: conventional $16$-QAM. Axes are real and imaginary parts.}
\label{fig:modified_16QAM}
\vspace{-4mm}
\end{figure}

{
\begin{example}
\emph{Modified square QAM alphabets:} Let $M$ be such that $\sqrt{M}$ is an even integer. The conventional $M$-ary square QAM alphabet $\Mb$ is the set of all complex numbers whose real and imaginary parts belong to $\sqrt{M}$-ary PAM, i.e.,
$\textstyle \{ \frac{-\sqrt{M}+1}{2},\frac{-\sqrt{M}+3}{2},\dots,\frac{-1}{2},\frac{1}{2},\dots,\frac{\sqrt{M}-1}{2} \}$. 
Note that \mbox{$\Mb \subset \Zb[i] + \alpha$} and \mbox{$-\alpha = -\sfrac{1}{2} -\sfrac{i}{2} \in \Mb$}.
Thus, $\Mb$ satisfies property \emph{P1}, but not \emph{P2}.
We design $\Mm$ by starting with $\Mb$, and replacing the element $-\alpha$ with an element \mbox{$\beta \in (\Zb[i] + \alpha) \setminus \Mb$}, i.e.,
\mbox{$\Mm = \Mb \cup \{\beta\} \setminus \{-\alpha\}$}.
In order to reduce the transmit power, we choose $\beta=-\sfrac{(\sqrt{M}+1)}{2} - \sfrac{i}{2}$. 
We refer to this resulting alphabet $\Mm$ as \emph{modified square QAM}.
It is clear that $\Mm$ satisfies properties \emph{P1} and \emph{P2} in~\eqref{eq:P1_P2}.
The modified $4$-QAM alphabet consists of the points $\sfrac{1}{2} + \sfrac{i}{2}$, $\sfrac{1}{2} - \sfrac{i}{2}$, $-\sfrac{1}{2} + \sfrac{i}{2}$ and $-\sfrac{3}{2} - \sfrac{i}{2}$, where the last element is $\beta$. The modified $16$-QAM alphabet is shown in Fig.~\ref{fig:modified_16QAM}. 
\end{example}

For an arbitrary value of $M$, we can obtain an alphabet $\Mm$ satisfying~\eqref{eq:P1_P2} using an inexpensive computer search. To reduce the transmit power we require points $x \in (\Zb[i] + \alpha) \setminus \{-\alpha\}$ for which $|x|^2 + |x+\alpha|^2$ is small; see~\eqref{eq:Pn}.
We obtain the alphabet $\Mm$ by sorting the elements $x$ of the set $\{a+ib~|~|a|, |b| \in \{\frac{1}{2}, \frac{3}{2},\dots, \lceil \frac{M}{2}\rceil  + \frac{1}{2} \}\} \setminus \{-\alpha\}$ in the ascending order of $|x|^2 + |x+\alpha|^2$, and choosing the first $M$ elements in this list.
}

In order to analyze the transmit power of our proposed constellation, we require the values of the average energies of the symbols in the alphabets $\Mm$ and $\Mm+\alpha=\{x + \alpha~|~x \in \Mm\}$.
When $\Mm$ is the modified square $M$-QAM, straightforward calculations show that the energies of these alphabets are
\begin{align}
\textstyle 
\energy(\Mm) &= \frac{1}{M} \sum_{x \in \Mm} |x|^2 =\frac{M-1}{6} + \frac{1}{4} + \frac{1}{2\sqrt{M}}, \label{eq:energy_Mm} \\
\energy(\Mm + \alpha) &= \frac{1}{M} \!\! \sum_{x \in \Mm + \alpha} \!\!\!\! |x|^2 = \frac{M-1}{6} + \frac{3}{4}. \label{eq:energy_Mm_alpha} 
\end{align} 
In comparison, the energy of the conventional square $M$-QAM is $(M-1)/6$.

\subsection{New Constellation using Translation Patterns}

{
Our new constellation uses the following ingredients:
\begin{enumerate}
\item[\emph{(i)}] Any $M$-ary complex alphabet $\Mm$ satisfying properties \emph{P1} and \emph{P2} in~\eqref{eq:P1_P2}. 
\item[\emph{(ii)}] Any collection $\Ac=\{A_1,\dots,A_L\}$ of antenna activation patterns, where each $A_\ell$ is an $\na$-sized subset of $[\nt]$.
\item[\emph{(iii)}] A set $\Tc$ of \emph{translation vectors} or \emph{translation patterns}, described below, that consists of $2^{\na-1}$ complex vectors, each of length $\na$, i.e., $\Tc \subset \Cb^\na$ and $|\Tc|=2^{\na-1}$. 
\end{enumerate} 

We restrict all the components of every translation vector to be equal to either $0$ or $\alpha$, and let $\Tc$ be the set of all such vectors whose components contain an even number of $\alpha$'s, i.e.,
\mbox{$\Tc = \left\{\, \p{t} \in \{0,\alpha\}^\na~|~\|\p{t}\|_0 \text{ is even}\, \right\}$}.
Note that $\Tc$ has the same structure as the binary single-parity check code~\cite{MWS_Theory_77}, except that the $1$'s in the single-parity check code are replaced with $\alpha$'s.
From the well-known properties of the single-parity check code, we deduce that \mbox{$|\Tc|=2^{\na-1}$}, and 
for any \mbox{$l \in [\na]$} if a translation vector is picked uniformly at random from $\Tc$, the value of its $l^\tth$ component is equally likely to be $0$ or $\alpha$.

Our scheme encodes $\na \log_2 M + \log_2 L + \na - 1$ information bits. 
The first $\na \log_2 M$ bits are modulated, $\log_2 M$ bits at a time, into the complex symbols $z_1,\dots,z_{\na} \in \Mm$, yielding the vector $\p{z}=(z_1,\dots,z_\na)^\tr$. The next block of $\log_2 L$ information bits select the antenna activation pattern $A \in \Ac$. The remaining \mbox{$\na-1$} information bits, say $b_1,\dots,b_{\na-1}$, are encoded into a single-parity check codeword, and the $1$'s appearing in this codeword are converted into $\alpha$'s to obtain the translation pattern $\p{t} \in \Tc$, i.e., $\p{t} = \alpha \times (b_1,\dots,b_{\na-1}, \oplus_{i=1}^{\na-1} b_i)^\tr$, where $\oplus$ denotes the binary XOR operation. 

We translate the vector $\p{z}$ using $\p{t}$, and transmit the $\na$ components of the resulting vector $\p{z} + \p{t}$ through the antennas with indices in $A$.
%
Let $A = \{j_1,\dots,j_{\na}\}$ with $j_1 < j_2 < \cdots < j_{\na}$.
The function $\p{x} = \psi_A(\p{z} + \p{t})$, defined below in~\eqref{eq:psi_A}, maps the entries of $\p{z} + \p{t}$ into the components of $\p{x}$ indexed by $A$,
\begin{equation} \label{eq:psi_A}
x_{j_l} = z_l + t_l \text{ for all } l \in [\na],~~ \text{ and }~~ x_j = 0 \text{ if } j \notin A.
\end{equation} 
The proposed constellation $\Cn(\Mm,\Ac,\Tc)$, or simply $\Cn$, is
\begin{equation} \label{eq:Cn}
\textstyle \Cn = \bigcup_{A \in \Ac} \, \bigcup_{\,\p{t} \in \Tc} \, \left\{ \psi_A(\p{z} + \p{t})~\Big|~ z_l \in \Mm, \, l \in [\na]  \,\right\}.
\end{equation}

The process of modulating $\na \log_2 M$ bits into complex symbols $z_1,\dots,z_{\na}$, and using $\log_2 L$ bits to select an antenna activation pattern $A$ are similar to GSM.
The additional steps used by our scheme are computing the parity $\oplus_{i=1}^{\na-1} b_i$ and the sum $\p{z} + \p{t}$, which have negligible complexity.
}

\begin{example} \label{ex:Example2}
Consider \mbox{$\nt=4$} and \mbox{$\na=3$}. 
In this case, $\Tc$ consists of the four vectors $(0,0,0)^\tr$, $(0,\alpha,\alpha)^\tr$, $(\alpha,0,\alpha)^\tr$, $(\alpha,\alpha,0)^\tr$.
Let \mbox{$L=4$} and $\Ac$ be the collection of the four subsets $\{1,2,3\}, \{1,2,4\}, \{1,3,4\}, \{2,3,4\}$.
The number of information bits encoded by $\Cn$ is $\na \log_2 M + \log_2 L + \na - 1= 3\log_2 M + 4$. Out of these, $3\log_2 M$ bits are encoded into complex symbols $z_1,z_2,z_3 \in \Mm$, $2$ bits are used to select one of the activation patterns from $\Ac$, and the remaining $2$ bits are used to choose one of the four possible translation vectors in $\Tc$. If the chosen translation vector is $(t_1,t_2,t_3)^\tr$, then the transmitted vector $\p{x}$ is of the form
\begin{align*}
&(z_1 + t_1, z_2 + t_2, z_3 + t_3, 0)^\tr, ~(z_1 + t_1, z_2 + t_2, 0 , z_3 + t_3)^\tr, \\
&(z_1 + t_1, 0, z_2 + t_2, z_3 + t_3)^\tr \text{ or } (0, z_1 + t_1, z_2 + t_2, z_3 + t_3)^\tr
\end{align*}
which correspond to the antenna activation patterns $\{1,2,3\}$, $\{1,2,4\}$, $\{1,3,4\}$ and $\{2,3,4\}$, respectively. 
\end{example}



\subsubsection*{Transmit Power of $\Cn$}
Let $\Eb$ denote the expectation operation.
We assume that the $\na$ components of $\p{z}$ are uniformly distributed over $\Mm$, $\p{t}$ has uniform distribution over $\Tc$, and all the activation patterns $A$ in $\Ac$ are equally likely. 
Since $\p{x}=\psi_A(\p{z} + \p{t})$, we observe that $\|\p{x}\|^2 = \|\p{z} + \p{t}\|^2$. Thus, the power of the constellation $\Cn$ is
$\Pn = \Eb \, \|\p{x}\|^2 = \Eb \, \|\p{z} + \p{t}\|^2 = \sum_{l=1}^{\na}\, \Eb \,|z_l + t_l|^2$.
Note that for each $l \in [\na]$, $z_l$ and $t_l$ are independent, $z_l$ is uniformly distributed in $\Mm$, and $t_l$ is equally likely to be $0$ or $\alpha$. Hence, $\Eb \,|z_l|^2=\energy(\Mm)$ and $\Eb \, |z_l + \alpha|^2=\energy(\Mm+\alpha)$, and 
\begin{equation} \label{eq:Pn}
\textstyle
\Pn = \sum_{l=1}^{\na} \frac{\Eb\,|z_l|^2 + \Eb\,|z_l + \alpha|^2}{2} = \na \, \frac{\energy(\Mm) + \energy(\Mm+\alpha)}{2}.
\end{equation} 

\subsubsection*{Minimum Squared Distance of $\Cn$}
We will show that $\dmin^2(\Cn)=1$ through the following lemmas. We will assume that the alphabet $\Mm$ satisfies the properties in~\eqref{eq:P1_P2}. 

\begin{lemma} \label{lem:x_supp}
For any $\p{z} \in \Mm^\na$, $\p{t} \in \Tc$ and $A \in \Ac$, the vector $\p{x} = \psi_A(\p{z} + \p{t})$ satisfies $|x_j|^2 \geq \sfrac{1}{2}$ for all $j \in A$ and $x_j=0$ for all $j \notin A$.
\end{lemma}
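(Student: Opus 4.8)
The plan is to check the two assertions separately. The claim $x_j = 0$ for $j \notin A$ needs no argument: it is literally the second half of the definition of $\psi_A$ in~\eqref{eq:psi_A}. So the entire content of the lemma is the lower bound $|x_j|^2 \ge \sfrac{1}{2}$ for the active indices $j \in A$.

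For such a $j$, I would write $j = j_l$ with $l \in [\na]$, so that $x_{j_l} = z_l + t_l$ where $z_l \in \Mm$ and $t_l \in \{0,\alpha\}$ (every component of a vector in $\Tc$ is $0$ or $\alpha$ by construction). I then split on the value of $t_l$. If $t_l = 0$, then $x_{j_l} = z_l \in \Mm \subset \Zb[i] + \alpha$ by Property \emph{P1}, and since every element of $\Zb[i] + \alpha$ has real and imaginary parts equal to odd multiples of $\sfrac{1}{2}$, its squared modulus is at least $(\sfrac{1}{2})^2 + (\sfrac{1}{2})^2 = \sfrac{1}{2}$ --- exactly the smallest-modulus remark made just before~\eqref{eq:P1_P2}. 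If instead $t_l = \alpha$, then $x_{j_l} = z_l + \alpha$; writing $z_l = g + \alpha$ with $g \in \Zb[i]$ gives $x_{j_l} = g + 2\alpha = g + 1 + i \in \Zb[i]$, a Gaussian integer. Property \emph{P2} now ensures $z_l \neq -\alpha$, hence $x_{j_l} = z_l + \alpha \neq 0$, so $x_{j_l}$ is a nonzero Gaussian integer and therefore $|x_{j_l}|^2 \ge 1 \ge \sfrac{1}{2}$.

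Combining the two cases yields $|x_j|^2 \ge \sfrac{1}{2}$ for every $j \in A$, which finishes the proof. There is essentially no obstacle here; the only point worth stressing is that the bound in the case $t_l = \alpha$ genuinely relies on \emph{P2}, since without it $z_l + \alpha$ could equal $0$ and the statement would fail, whereas the case $t_l = 0$ uses only \emph{P1}. This lemma is presumably the base step toward showing the supports of distinct transmit vectors interact nicely, so that $\dmin^2(\Cn) = 1$ can be established in the subsequent lemmas.
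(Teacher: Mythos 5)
Your proof is correct and follows essentially the same argument as the paper: the same case split on $t_l \in \{0,\alpha\}$, with Property \emph{P1} giving $|z_l|^2 \geq \sfrac{1}{2}$ when $t_l=0$, and Property \emph{P2} ensuring $z_l + \alpha$ is a nonzero Gaussian integer when $t_l=\alpha$. No gaps; your remark that \emph{P2} is genuinely needed in the second case matches the role it plays in the paper.
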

\begin{proof}
For a given \mbox{$j \in A$} there exists an $l \in [\na]$ such that \mbox{$x_j = z_l + t_l$}. 
If \mbox{$t_l=0$}, then \mbox{$x_j=z_l \in \Mm \subset \Zb[i] + \alpha$}, and since every element in $\Zb[i]+\alpha$ has squared absolute value at least $\sfrac{1}{2}$, we have $|x_j|^2 = |z_l|^2 \geq \sfrac{1}{2}$.
On the other hand, if $t_l=\alpha$, then the real and imaginary parts of $x_j = z_l + \alpha$ are both integers, and hence, $x_j \in \Zb[i]$. Also, $z_l \neq -\alpha$ since $-\alpha \notin \Mm$, and hence, $x_j \neq 0$. In this case, $|x_j|^2 \geq 1$. We conclude that $|x_j|^2 \geq \sfrac{1}{2}$ for all $j \in A$. 
Also, by construction, $x_j=0$ for $j \notin A$, see~\eqref{eq:psi_A}. 
\end{proof}

\begin{lemma} \label{lem:z_t_distance}
Let $\p{z},\p{z'} \in \Mm^\na$ and $\p{t},\p{t'} \in \Tc$ be any choice of vectors such that $(\p{z},\p{t}) \neq (\p{z'},\p{t'})$. 
Then $\|\p{z}+\p{t} - (\p{z'} + \p{t'})\|^2 \geq 1$.
\end{lemma}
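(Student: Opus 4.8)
The plan is to work coordinatewise and then exploit the single-parity-check (even-weight) structure of $\Tc$. For each $l \in [\na]$ set $d_l = (z_l + t_l) - (z_l' + t_l')$, so that the goal is to show $\sum_{l=1}^{\na} |d_l|^2 \geq 1$. I would partition $[\na]$ into the \emph{agreement} positions, where $t_l = t_l'$, and the \emph{mismatch} positions, where $\{t_l, t_l'\} = \{0,\alpha\}$.

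First I would bound $|d_l|^2$ on each type of position. At an agreement position, $d_l = z_l - z_l'$, and since $z_l, z_l' \in \Zb[i] + \alpha$ by Property P1, their difference lies in $\Zb[i]$; hence $|d_l|^2$ is a nonnegative integer, so it is either $0$ or at least $1$. At a mismatch position, $d_l = (z_l - z_l') \pm \alpha \in \Zb[i] + \alpha$, so $|d_l|^2 \geq \sfrac{1}{2}$, using the minimum squared modulus of $\Zb[i]+\alpha$ recorded before Lemma~\ref{lem:x_supp}.

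Next I would count mismatch positions. Since $\p{t}, \p{t}' \in \Tc$ both have an even number of $\alpha$-components, and the number of mismatch positions equals $\|\p{t}\|_0 + \|\p{t}'\|_0 - 2\,\big|\{l : t_l = t_l' = \alpha\}\big|$, this count is even. Now split into two cases. If $\p{t} = \p{t}'$, there are no mismatch positions, and $(\p{z},\p{t}) \neq (\p{z}',\p{t}')$ forces $\p{z} \neq \p{z}'$; picking $l$ with $z_l \neq z_l'$ gives $|d_l|^2 \geq 1$ and hence $\sum_l |d_l|^2 \geq 1$. If $\p{t} \neq \p{t}'$, the number of mismatch positions is a positive even integer, hence at least $2$, and each contributes at least $\sfrac{1}{2}$, so $\sum_l |d_l|^2 \geq 2 \cdot \sfrac{1}{2} = 1$.

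The only delicate point — and the crux of why the parity constraint on $\Tc$ matters — is the step showing that $\p{t} \neq \p{t}'$ implies at least two mismatch positions: a single mismatch would flip the parity of the number of $\alpha$'s, which $\Tc$ forbids. This is exactly what excludes the otherwise fatal configuration in which $\p{z} = \p{z}'$ while $\p{t}, \p{t}'$ differ in a single coordinate by $\alpha$ (which would yield squared distance only $\sfrac{1}{2}$). Everything else is a routine componentwise estimate; note that only Property P1 and the even-weight structure of $\Tc$ are used here, not Property P2.
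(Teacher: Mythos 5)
Your proof is correct and follows essentially the same route as the paper: the same case split on $\p{t}=\p{t}'$ versus $\p{t}\neq\p{t}'$, the same componentwise bounds ($\Zb[i]$ differences contribute at least $1$, $\Zb[i]+\alpha$ differences at least $\sfrac{1}{2}$), and the same use of the even-weight structure of $\Tc$ to guarantee at least two mismatch positions. Your explicit counting identity for the number of mismatches, and the remark that only Property P1 is needed here, merely spell out details the paper leaves implicit.
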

\begin{proof}
Since $\Mm \subset \Zb[i] + \alpha$, the real and imaginary parts of every component of $\p{z}$ and $\p{z'}$ are half-integers, i.e., belong to $\Zb + \sfrac{1}{2}$. 
This implies that the real and imaginary parts of each component of $\p{z}-\p{z'}$ is an integer, i.e., $\p{z} - \p{z'} \in \Zb[i]^{\na}$.
 
We first consider the case $\p{t} = \p{t'}$. Necessarily $\p{z} \neq \p{z'}$, and at least one of the components of $\p{z} - \p{z'}$ is non-zero. Since this component of $\p{z} - \p{z'}$ belongs to $\Zb[i]$ and is non-zero, we have $\|\p{z} + \p{t} - (\p{z'} + \p{t'})\|^2 = \|\p{z} - \p{z'}\|^2 \geq 1$.

Now consider $\p{t} \neq \p{t'}$. Since the vectors $\p{t}$ and $\p{t'}$ have even number of non-zero elements (i.e., $\alpha$'s), they differ in at least two components, say $l,m \in [\na]$. The $l^\tth$ and $m^\tth$ components of $\p{t} - \p{t'}$ belong to the set $\{\alpha,-\alpha\}$.
Now consider the vector $\p{z} + \p{t} - (\p{z'} + \p{t'}) = (\p{z} - \p{z'}) + (\p{t} - \p{t'})$. The $l^\tth$ component of $\p{z}-\p{z'}$ belongs to $\Zb[i]$ and the $l^\tth$ component of $\p{t}-\p{t'}$ is $\pm \alpha$. Thus, the $l^\tth$ component of $(\p{z} - \p{z'}) + (\p{t} - \p{t'})$ belongs to $\Zb[i] + \alpha$, and therefore, this component has squared magnitude at least $\sfrac{1}{2}$. Similar result holds for the $m^\tth$ component of $(\p{z} - \p{z'}) + (\p{t} - \p{t'})$ as well. 
We conclude that the squared norm of this vector is at least $\sfrac{1}{2} + \sfrac{1}{2} = 1$.
\end{proof}


\begin{theorem} \label{thm:mindist_Cn}
The minimum distance of $\Cn$ is $\dmin(\Cn)=1$.
\end{theorem}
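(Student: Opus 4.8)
The plan is to establish $\dmin(\Cn) = 1$ by proving the two matching inequalities $\dmin^2(\Cn) \geq 1$ and $\dmin^2(\Cn) \leq 1$ separately. The lower bound is the substantive part and splits according to whether two distinct constellation points come from the same antenna activation pattern or from different ones; the upper bound just requires exhibiting one pair of points at squared distance exactly $1$.

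First I would prove the lower bound. Take distinct $\p{x} = \psi_A(\p{z}+\p{t})$ and $\p{x}' = \psi_{A'}(\p{z}'+\p{t}')$ in $\Cn$. If $A = A'$, then since $\psi_A$ acts as an isometry onto the coordinates indexed by $A$ (it merely places the entries of $\p{z}+\p{t}$ into fixed slots and zeros elsewhere), we have $\|\p{x}-\p{x}'\|^2 = \|(\p{z}+\p{t}) - (\p{z}'+\p{t}')\|^2$, and distinctness of $\p{x},\p{x}'$ forces $(\p{z},\p{t}) \neq (\p{z}',\p{t}')$, so Lemma~\ref{lem:z_t_distance} gives $\|\p{x}-\p{x}'\|^2 \geq 1$. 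If $A \neq A'$, there exists an index $j \in A \setminus A'$ (since $|A| = |A'| = \na$, the symmetric difference is nonempty and each side of it is nonempty). For such a $j$, Lemma~\ref{lem:x_supp} gives $|x_j|^2 \geq \sfrac{1}{2}$ while $x'_j = 0$ because $j \notin A'$, so $|x_j - x'_j|^2 \geq \sfrac{1}{2}$; symmetrically, picking $j' \in A' \setminus A$ yields $|x_{j'} - x'_{j'}|^2 \geq \sfrac{1}{2}$, and since $j \neq j'$ these two contributions are distinct coordinates, giving $\|\p{x}-\p{x}'\|^2 \geq \sfrac{1}{2} + \sfrac{1}{2} = 1$. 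This covers all cases, so $\dmin^2(\Cn) \geq 1$.

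For the upper bound, I would exhibit two constellation points at squared distance $1$. Fix any $A \in \Ac$ and the all-zero translation pattern $\p{t} = \p{0} \in \Tc$ (it has zero, hence an even number, of $\alpha$'s). Since $|\Mm| = M \geq 2$ there exist $z, z' \in \Mm$ with $z \neq z'$; in fact, by Property P1 we can look for two elements of $\Mm$ differing by a unit Gaussian integer — for the modified square QAM this is immediate (e.g. $\sfrac{1}{2}+\sfrac{i}{2}$ and $-\sfrac{1}{2}+\sfrac{i}{2}$ both lie in $\Mm$), and in general one can argue that any $\Mm$ satisfying P1 and P2 with $M \geq 2$ must contain two elements at distance $1$, or simply note that the constructions in the paper always do. Taking $\p{z} = (z, z', z', \dots, z')^\tr$ and $\p{z}' = (z', z', \dots, z')^\tr$, the points $\psi_A(\p{z})$ and $\psi_A(\p{z}')$ differ only in one coordinate, by $z - z'$, so their squared distance is $|z-z'|^2 = 1$. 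Hence $\dmin^2(\Cn) \leq 1$, and combining with the lower bound completes the proof.

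The main obstacle is the case $A \neq A'$ in the lower bound: one must be careful that the coordinate witnesses from $A \setminus A'$ and $A' \setminus A$ are genuinely different indices so their squared-magnitude contributions add, and that Lemma~\ref{lem:x_supp} is being applied to the correct vector on each side. A secondary subtlety is justifying, for a general P1/P2 alphabet, that the minimum distance $1$ is actually attained rather than merely lower-bounded — for the explicit modified QAM alphabets this is transparent, but a fully general statement needs either a short argument that such an $\Mm$ always contains a pair at distance $1$, or a hypothesis to that effect; I expect the paper restricts attention to the concrete alphabets where attainment is clear.
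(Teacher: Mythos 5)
Your lower bound is correct and is essentially identical to the paper's argument: the case $A=A'$ reduces to Lemma~\ref{lem:z_t_distance} via the isometry property of $\psi_A$, and the case $A \neq A'$ uses Lemma~\ref{lem:x_supp} at two distinct witness coordinates $j \in A \setminus A'$ and $j' \in A' \setminus A$, exactly as in the paper.

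The upper bound, however, contains a genuine gap. You try to exhibit a distance-$1$ pair by varying one complex symbol, which requires $\Mm$ to contain two elements at Euclidean distance $1$. Your claim that ``any $\Mm$ satisfying P1 and P2 with $M \geq 2$ must contain two elements at distance $1$'' is false: for instance, $\Mm = \{\, \sfrac{1}{2}+\sfrac{i}{2},\; \sfrac{5}{2}+\sfrac{i}{2} \,\}$ satisfies both P1 and P2, yet its two points are at distance $2$. Since the theorem is stated (and later used, e.g.\ for the $37$-ary and $13$-ary search-based alphabets) for an arbitrary alphabet satisfying~\eqref{eq:P1_P2}, falling back on ``the constructions in the paper always do'' does not close the argument. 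The paper avoids the issue entirely by varying the \emph{translation pattern} instead of the symbols: fix any $\p{z} \in \Mm^{\na}$ and $A \in \Ac$, take $\p{t}=(0,\dots,0)^\tr$ and $\p{t'}=(\alpha,\alpha,0,\dots,0)^\tr$, both of which lie in $\Tc$ because $\na \geq 2$ and each has an even number of $\alpha$'s; then $\|\psi_A(\p{z}+\p{t}) - \psi_A(\p{z}+\p{t'})\|^2 = \|\p{t}-\p{t'}\|^2 = 2|\alpha|^2 = 1$, with no assumption on $\Mm$ whatsoever. Replacing your attainment argument with this one makes the proof complete in the stated generality.
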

\begin{proof}
We will show that $\dmin^2(\Cn) \geq 1$ and $\dmin^2(\Cn) \leq 1$.

\emph{Lower bound:} Let $\p{x}$ and $\p{x'}$ be generated using the tuples $(\p{z},\p{t},A)$ and $(\p{z'},\p{t'},A')$, respectively, i.e., $\p{x}=\psi_A(\p{z} + \p{t})$ and $\p{x'} = \psi_{A'}(\p{z'} + \p{t'})$. We will assume that $(\p{z},\p{t},A) \neq (\p{z'},\p{t'},A')$.

Assume $A \neq A'$. 
There exist $j \in A \setminus A'$ and $j' \in A' \setminus A$, since $A \neq A'$ and $|A|=|A'|$. 
From Lemma~\ref{lem:x_supp}, $|x_j|^2, |x'_{j'}|^2 \geq \sfrac{1}{2}$. 
Since the $j^\tth$ and $j'\,^\tth$ components of $\p{x}-\p{x'}$ are $x_j$ and $-x'_{j'}$, we conclude $\|\p{x} - \p{x'}\|^2 \geq |x_j|^2 + |-x'_{j'}|^2 \geq \sfrac{1}{2} + \sfrac{1}{2} = 1$.

If $A=A'$, then $(\p{z},\p{t}) \neq (\p{z'},\p{t'})$ and $\|\p{x} - \p{x'}\|^2=\|\p{z} + \p{t} - (\p{z'} + \p{t'})\|^2$. 
From Lemma~\ref{lem:z_t_distance}, we conclude $\|\p{x} - \p{x'}\|^2=\|\p{z} + \p{t} - (\p{z'} + \p{t'})\|^2 \geq 1$. 


\emph{Upper bound:} Consider any $\p{z} \in \Mm^\na$ and $A \in \Ac$. Let $\p{t}=(0,\dots,0)^\tr$ and $\p{t'}=(\alpha,\alpha,0,\dots,0)^\tr$.
Consider the codewords $\p{x} = \psi_A(\p{z} + \p{t})$ and $\p{x'}=\psi_A(\p{z} + \p{t'})$.
Then $\dmin^2 \leq \|\p{x} - \p{x'}\|^2=\|\p{z} + \p{t} - (\p{z} + \p{t'})\|^2=\|\p{t} - \p{t'}\|^2=1$. 
\end{proof}


{
\subsubsection*{Nominal Coding Gain of $\Cn$}

Using~\eqref{eq:Pn} and Theorem~\ref{thm:mindist_Cn},
\begin{equation} \label{eq:delta_Cn_general} 
\delta(\Cn) = \frac{\dmin^2(\Cn)}{\Pn} = \frac{2}{\na (\, \energy(\Mm) + \energy(\Mm + \alpha)\,)}.
\end{equation}
Observe that $\delta(\Cn)$ is inversely proportional to the energies of the alphabets $\Mm$ and $\Mm+ \alpha$. A larger value of $M$ requires the use of a larger alphabet $\Mm$, which increases the energies $\mathcal{E}(\Mm)$ and $\mathcal{E}(\Mm + \alpha)$, and thereby reduces $\delta(\Cn)$.
From~\eqref{eq:Cn}, note that $|\Cn|=M^{\na}L 2^{\na-1}$ and the spectral efficiency is $\na \log_2 M + \log_2 L + \na - 1$.
Thus, for given values of $\nt,\na$ and spectral efficiency, we can maximize the nominal coding gain by using as small a value of $M$, or equivalently, as large a value of $L$, as possible.
Finally, note that when $\Mm$ is the modified square $M$-QAM,
using~\eqref{eq:energy_Mm} and~\eqref{eq:energy_Mm_alpha} in~\eqref{eq:delta_Cn_general}, 
\begin{equation} \label{eq:delta_Cn_squareQAM}
\textstyle \delta(\Cn) = \left[ {\na \left( \frac{M-1}{6} + \frac{1}{2} + \frac{1}{4\sqrt{M}} \right)} \right] ^{-1}. 
\end{equation}  
}

{
\subsubsection*{Decoding Complexity} 

The maximum-likelihood decoder computes $\arg \min_{\p{x} \in \Cn} \|\p{y} - \p{Hx}\|^2$, where $\p{y}$ and $\p{H}$ are channel output and channel matrix, respectively. 
From~\eqref{eq:psi_A} and Lemma~\ref{lem:x_supp}, \mbox{$\|\p{x}\|_0=\na$} for all \mbox{$\p{x} \in \Cn$}.
Section~V-C of~\cite{CSSS_TWC_16} shows that if every transmit vector $\p{x}$ satisfies \mbox{$\|\p{x}\|_0=\na$}, as is the case with our constellation $\Cn$, then the number of complex floating point operations (flops) required to compute \mbox{$\|\p{y} - \p{Hx}\|^2$} for all choices of $\p{x}$ is $2^\eta (2\nr (\na + 1) - 1)$, where $\eta = \log_2 |\Cn|$. 
Our receiver complexity is identical to the ML decoding complexity of GSM and the constellations from~\cite{CSSS_TWC_16,FRS_WCL_18} for the same of values of $\na,\nr$ and spectral efficiency $\eta$. 
}

\subsection{Comparison with GSM}

We compare the new constellation with GSM 
when the spectral efficiencies are equal.
For fairness, we will also assume that both schemes use the same number of antenna activation patterns $L$.
We will assume that $\Cn$ uses a modified square QAM alphabet of size \mbox{$M=2^{2m}$}, where \mbox{$m \geq 2$} is a positive integer. 
Then the spectral efficiency of $\Cn$ is $2m\na + \log_2 L + \na - 1$.
To achieve the same spectral efficiency, GSM must use two different complex alphabets --- it encodes one symbol using $2^{2m}$-ary square QAM, and the remaining $\na-1$ symbols using a $2^{2m+1}$-ary QAM. We assume that both the alphabets are subsets of $\Zb[i] + \alpha$, and the latter alphabet is a \emph{cross QAM}~\cite{VAK_TWC_05} of size $2^{2m+1}$  
(note that the energy of cross QAM is less than that of the rectangular QAM of size $2^{2m+1}$).
It is straightforward to show that the minimum distance of this GSM constellation is equal to $1$. 
The energy of the $2^{2m}$-ary square QAM is $(2^{2m}-1)/6$, and that of the $2^{2m+1}$-ary cross QAM is 
$(31 \times 2^{2m+1}/32 \,-\, 1)/6$, see~\cite{VAK_TWC_05}.
Using this we obtain the power of the GSM constellation $\Cs_\gsm$
\begin{equation*}
P(\Cs_\gsm) = \frac{2^{2m}-1}{6} + (\na-1)\frac{31 \times 2^{2m+1}/32 \,-\, 1}{6}, 
\end{equation*} 
and the nominal coding gain $\delta(\Cs_\gsm)=1/P(\Cs_\gsm)$.
Using this with~\eqref{eq:delta_Cn_squareQAM}, and after straightforward manipulations, 
\begin{equation} \label{eq:delta_Cn_Cgsm_ratio}
\frac{\delta(\Cn)}{\delta(\Cs_\gsm)}  = \frac{31}{16} ~\times~ \frac{1 - \frac{16}{31 M} - \frac{15}{31 \na}}{1 + \frac{2}{M} + \frac{3}{2M\sqrt{M}}}.
\end{equation} 
For large values of $M=2^{2m}$ and $\na$, 
the ratio $\delta(\Cn)/\delta(\Cs_\gsm)$ 
approaches \mbox{$31/16 \approx 1.94$}, which is a gain of $2.87$~dB.
Observe that the second term on the RHS of~\eqref{eq:delta_Cn_Cgsm_ratio} is an increasing function of $M$ and $\na$. 
Thus, substituting the smallest possible values for $M$ and $\na$, i.e., \mbox{$M=16$} and \mbox{$\na=2$}, in the RHS of~\eqref{eq:delta_Cn_Cgsm_ratio} shows that $\delta(\Cn)/\delta(\Cs_\gsm)$ is lower bounded by $60/49$ 
(approximately $0.86$~dB), for any $m \geq 2$ and $\na \geq 2$. 

\section{Simulation Results} \label{sec:simulation}

We now present simulation results to compare the transmit vector or codeword error rate (CER) of the new constellation with known schemes in the literature under maximum-likelihood decoding at the receiver. 
We do not consider schemes whose transmit vectors contain components with irrational real or imaginary parts. 
All simulations presented in this section use $\nt=4$. 

\begin{figure}[!t]
\centering
\includegraphics[width=3.2in]{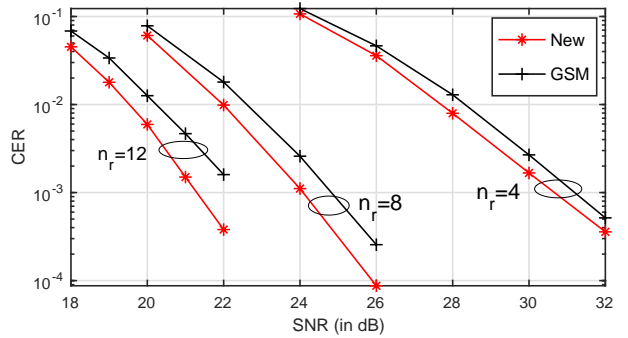}
\vspace{-2mm}
\caption{Comparison with GSM, \mbox{$\na=3$}, $\eta=22$~bits/sec/Hz.}
\label{fig:GSM}
\vspace{-2mm}
\end{figure} 

Fig.~\ref{fig:GSM} shows the comparison of the new scheme $\Cn$ with GSM for spectral efficiency $22$~bits/sec/Hz, $\na=3$ active antennas and for three different values of $\nr$, viz., $4$, $8$ and $12$. 
Both schemes use \mbox{$L=4$} antenna activation patterns. 
While the new scheme uses the modified square $64$-QAM alphabet and $2^{\na-1}=4$ translation patterns (as in Example~\ref{ex:Example2}), GSM modulates two of the three transmit symbols using cross $128$-QAM and the remaining symbol using square $64$-QAM. Fig.~2 shows that the new scheme is better, and the performance gap is larger for larger values of $\nr$. From~\eqref{eq:delta_Cn_Cgsm_ratio}, the nominal coding gain of $\Cn$ is larger than that of GSM by $1.92$~dB.
Note that~\cite{CSSS_TCOM_15,CSSS_TWC_16,FRS_WCL_18} do not provide constructions for $\na=3$.

\begin{figure}[!t]
\centering
\includegraphics[width=3.2in]{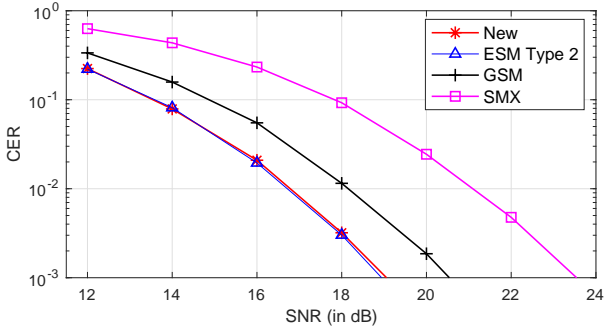}
\vspace{-2mm}
\caption{Comparison with ESM Type2, GSM and SMX.}
\label{fig:ESM2}
\vspace{-2mm}
\end{figure} 

Fig.~\ref{fig:ESM2} compares the new scheme with GSM, spatial multiplexing (SMX)~\cite{TsV_Cambridge}, and Enhanced Spatial Modulation (ESM)-Type2 from~\cite{CSSS_TWC_16} for \mbox{$\na=2$}, \mbox{$\nr=8$} and \mbox{$\eta=14$}~bits/sec/Hz.
Among the schemes presented in~\cite{CSSS_TWC_16} that do not perform coding across time, ESM-Type2 has the best error performance.
SMX uses two transmit antennas and cross $128$-QAM alphabet. GSM uses square $64$-QAM and \mbox{\emph{$L=4$}} antenna activation patterns.
ESM-Type2 exploits all possible, i.e., $\binom{\nt}{\na}=6$, pairwise antenna activations in addition to single-antenna activations.
For fairness, the new scheme $\Cn$ (see~\eqref{eq:Cn}) uses \mbox{$L=6$} activation patterns. We use a $37$-ary alphabet $\Mm$ satisfying~\eqref{eq:P1_P2}, along with $2^{\na-1}=2$ translation patterns, yielding $\eta$ slightly more than $14$~bits/sec/Hz.
Fig.~\ref{fig:ESM2} shows that ESM-Type2 and the new scheme have similar CER, and both of them perform better than GSM and SMX.
The nominal coding gains $\delta$ of ESM-Type2 and the new scheme are $0.0821$ and $0.079$, respectively, a loss of $0.17$~dB for the new scheme with respect to ESM-Type2.

\begin{figure}[!t]
\centering
\includegraphics[width=3.2in]{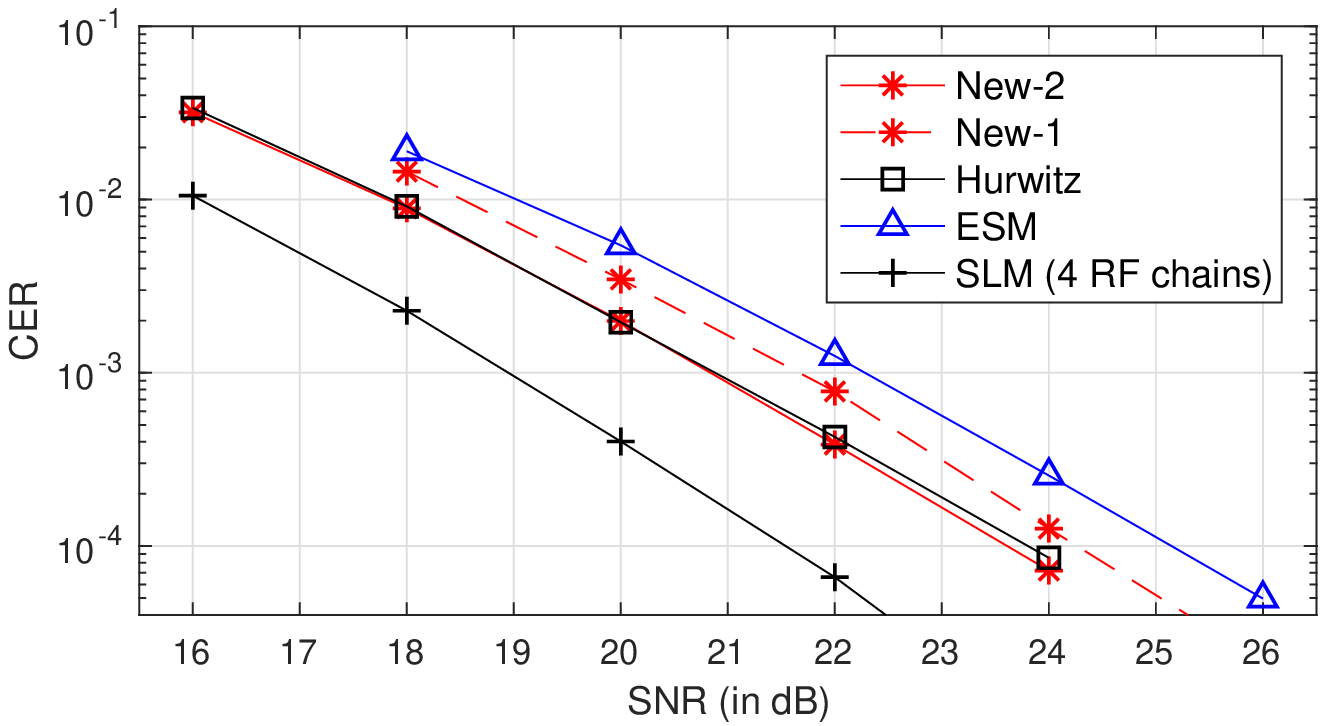}
\vspace{-2mm}
\caption{Comparison for \mbox{$\na=2$}, \mbox{$\nr=4$}, \mbox{$\eta \approx 11$}~bits/sec/Hz.}
\label{fig:Hurwitz}
\vspace{-3mm}
\end{figure} 

In Fig.~\ref{fig:Hurwitz} we consider $\na=2$, $\nr=4$ and spectral efficiencies close to $11$~bits/sec/Hz. We compare two new schemes, New-1 and New-2, with the Hurwitz integers based constellation from~\cite{FRS_WCL_18} and Enhanced Spatial Modulation (ESM) from~\cite{CSSS_TCOM_15}. The scheme from~\cite{FRS_WCL_18} uses $L=6$ antenna activation patterns, while ESM uses $6$ pairwise antenna activations and $4$ single-antenna activations, and both these schemes have $\eta=11$~bits/sec/Hz. 
The new scheme New-1 uses modified square $16$-QAM (see Fig.~\ref{fig:modified_16QAM}) and $L=4$ yielding $\eta=11$.
To be fair with respect to the number of antenna activation patterns, we include the performance of New-2, which uses $L=6$ and a $13$-point constellation $\Mm$ yielding $\eta=10.99$~bits/sec/Hz. 
Fig.~\ref{fig:Hurwitz} shows that New-1 performs better than ESM, even though it uses less number of antenna activations than ESM, and New-2 outperforms both ESM and the Hurwitz integer based constellation.
For completeness, Fig.~\ref{fig:Hurwitz} also shows the performance of spatial lattice modulation (SLM) based on Barnes-Wall lattice~\cite{CNL_TSP_18} with \mbox{$\eta=11$} that requires four transmit RF chains. 
All other schemes in Fig.~2 use \mbox{$\na=2$} RF chains.
The nominal coding gains $\delta$ of New-1, New-2, the Hurwitz integer scheme~\cite{FRS_WCL_18}, ESM~\cite{CSSS_TCOM_15} and SLM~\cite{CNL_TSP_18} are $0.1633$, $0.1985$, $0.1901$, $0.0773$ and $0.5314$, respectively. 

\section{Conclusion} \label{sec:conclusion}

We showed that carefully designed translation patterns, in addition to antenna activation patterns, can be used to encode information in GSM. 
Comparing with known schemes whose transmit vectors consist of only rational components, the new constellation performs nearly as well as ESM-Type2~\cite{CSSS_TWC_16}, and outperforms~\cite{CSSS_TCOM_15,FRS_WCL_18} and GSM in terms of error performance and nominal coding gain.
Explicit construction of ESM-Type2 is available only for limited choice of system parameters (only for $16$ and $64$-QAM alphabets, $\nt=4,8$ and $\na$ even).
In comparison, our framework provides constructions for any choice of $\nt$, $\na \geq 2$ and alphabet size $M$.


\bibliographystyle{IEEEtran}

\end{document}